\newtheorem{prop}{Proposition}
\begin{document}
\title{Gaussian Process-Based Bayesian Nonparametric Inference of Population Trajectories from Gene Genealogies}
\author{Julia A. Palacios and Vladimir N. Minin*\\
Department of Statistics, University of Washington\\
\texttt{*corresponding author: vminin@uw.edu}}
\date{}
\maketitle

\begin{abstract}
Changes in population size influence genetic diversity of the population and, as a result, leave 
a signature of these changes in individual genomes in the population. We are interested in the inverse problem of reconstructing 
past population dynamics from genomic data. We start with a standard framework based on the 
coalescent, a stochastic process that generates genealogies connecting randomly sampled individuals
from the population of interest. These genealogies serve as a glue between the population demographic history and
genomic sequences. It turns out that only the times of genealogical lineage coalescences contain information
about population size dynamics. Viewing these coalescent times as a point process, estimating population
size trajectories is equivalent to estimating a conditional intensity of this point process. Therefore, our inverse problem is similar to estimating an inhomogeneous Poisson process intensity function. We demonstrate how recent advances in Gaussian process-based nonparametric inference for Poisson processes can be extended to Bayesian nonparametric estimation of population size dynamics under the coalescent. We compare our Gaussian process (GP) approach to one of the state of the art Gaussian Markov random field (GMRF) methods for estimating population trajectories. Using simulated data, we demonstrate that our method has better accuracy and precision. Next, we analyze two genealogies reconstructed from real sequences of hepatitis C and human Influenza A viruses. In both cases, we recover more believed aspects of the viral demographic histories than the GMRF approach. We also find that our GP method produces more reasonable uncertainty estimates than the GMRF method. \\
\end{abstract}

\section{Introduction}
Statistical inference in population genetics increasingly relies on the coalescent \citep{Kingman:1982uj}, the probability model that describes the relationship between a gene genealogy of a random sample of molecular sequences and  effective population size. This model provides a good approximation to the distribution of ancestral histories that arise from classical population genetics models \citep{rosenberg_genealogical_2002}. More importantly, coalescent-based inference methods allow us to estimate population genetic parameters, including population size trajectories, directly from genomic sequences \citep{griffiths_sampling_1994}. Recent examples of coalescent-based population dynamics estimation include reconstructing demographic histories of musk ox \citep{campos_2010} 
%and Beringian steppe bison \citep{shapiro_rise_2004} 
from fossil DNA samples and elucidating patterns of genetic diversity of the
%avian influenza A \citep{holmes_discovering_2009} and 
dengue virus \citep{bennett_epidemic_2010}. 
\par
Here, we are interested in estimating effective population size trajectories from gene genealogies. The \textit{effective population size} is an abstract parameter that for a real biological population is proportional to the rate at which genetic diversity is lost or gained. In the absence of natural selection, the effective population size can be used to approximate census population size by knowing the generation time in calendar units (e.g. years) and the population variability in number of offspring \citep{Wakeley10112008}. The latter quantity might be difficult to know; however, sometimes it suffices to analyze an arbitrarily rescaled population size trajectory, assuming the variability in number of offspring remains constant. The effective population size is equal to the census population size in an idealized Wright-Fisher model. The Wright-Fisher model is a simple and established model of neutral reproduction in population genetics that assumes random mating and non-overlapping generations. 
For some RNA viruses, for example human influenza A virus, the effective population size rescaled by generation time (3 to 4 days) cannot be interpreted directly as the effective number of infections because of the presence of strong natural selection. However, one can always adopt a more cautious interpretation of the effective population size as a measure of relative genetic diversity \citep{rambaut_genomic_2008,Frost2010}.
\par
Coalescent-based methods for estimation of population size dynamics have evolved from stringent parametric assumptions, such as constant population size or exponential growth \citep{griffiths_sampling_1994, kuhner_maximum_1998, drummond_estimating_2002}, to more flexible nonparametric approaches that assume piecewise linear population trajectories \citep{strimmer_exploring_2001, opgen-rhein_inference_2005, drummond_bayesian_2005, heled_bayesian_2008, minin_smooth_2008}. The latter class of methods is more appropriate in the absence of prior knowledge about the underlying demographic dynamics, allowing researchers to infer shapes of population size trajectories rather than to impose parametric constraints on these shapes. These nonparametric methods, however, model population dynamics by imposing \textit{a priori} piecewise continuous functions which require regularization either by smoothing or by controlling the number of change points, also \textit{a priori}. The former regularization -- which works better in practice \citep{minin_smooth_2008} -- is inherently difficult because these piecewise continuous functions are defined on intervals of varying size. The piecewise nature of these methods creates further modeling problems if one wishes to incorporate covariates into the model or impose constraints on population size dynamics \citep{minin_smooth_2008}.  In this paper, we propose to solve these problems by bringing the coalescent-based estimation of population dynamics up to speed with modern Bayesian nonparametric methods. Making this leap in statistical methodology will allow us  to avoid artificial discretization of population trajectories, to perform regularization without making arbitrary scale choices, and, in the future, to extend our method into a multivariate setting.
\par
Our key insight stems from the fact that the coalescent with variable population size is an inhomogeneous continuous-time Markov chain \citep{TavareNotes} and, therefore, can be viewed as an inhomogeneous point process \citep{andersen_statistical_1995}. In fact, all current Bayesian nonparametric methods of estimation of population size dynamics resemble early Bayesian approaches to nonparametric estimation of the Poisson intensity function via piecewise continuous functions \citep{Arjas_2008}. Estimation of the intensity function of an inhomogeneous Poisson process is a mature field that evolved from maximum likelihood estimation under parametric assumptions \citep{brillinger_analyzing_1979} to frequentist \citep{diggle_kernel_1985} and, more recently, Bayesian nonparametric methods \citep{Arjas_2008, mller_log_1998, kottas_bayesian_2005, adams_tractable_2009}.
\par
Following \citet{adams_tractable_2009}, we \textit{a priori} assume that population trajectories follow a transformed Gaussian process (GP), allowing us to model the population trajectory as a continuous function. This is a convenient way to specify prior beliefs without a particular functional form on the population trajectory. The drawback of such a flexible prior is that the likelihood function involves integration over an infinite-dimensional random object and, as a result, likelihood evaluation becomes intractable. Fortunately, we are able to avoid this intractability and perform inference exactly by adopting recent algorithmic developments proposed by \citet{adams_tractable_2009}. 
We achieve tractability by a novel data augmentation for the coalescent process that relies on thinning algorithms for 
simulating the coalescent.
\par
Thinning is an accept/reject algorithm that was first proposed by \citet{lewis_simulation_1978} for the 
simulation of inhomogeneous Poisson processes and was later extended to a more general class of point processes by \citet{ogata2002lewis}. In the spirit of \citet{ogata2002lewis}, we develop novel thinning algorithms for the simulation of the coalescent. These algorithms, interesting in their own right, open the door for latent variable representation of the coalescent. This representation leads to a new data augmentation that is computationally tractable and amenable to standard Markov chain Monte Carlo (MCMC) sampling from the posterior distribution of model parameters and latent variables.
\par
We test our method on simulated data and compare its performance with a representative piecewise linear approach, a Gaussian Markov random field (GMRF) based method \citep{minin_smooth_2008}. We demonstrate that our method is more accurate and more precise than the GMRF method in all simulation scenarios. We also apply our method to two real data sets that have been previously analyzed in the literature: a hepatitis C virus (HCV) genealogy estimated from sequences sampled in 1993 in Egypt and a genealogy of the H3N2 human influenza A virus estimated from sequences sampled in New York state between 2002 and 2005. In the HCV analysis, we successfully recover all believed key aspects of the population size trajectory. Compared to the GMRF method, our GP method better reflects the uncertainty inherent in the HCV data. In our second real data example, our GP method successfully reconstructs a population trajectory of the human influenza A virus with an expected seasonal series of peaks followed by population bottlenecks, while the GMRF method's reconstructed trajectory fails to recover some of the peaks and bottlenecks.

\section{Methods}
\subsection{Coalescent Background}
The coalescent model allows us to trace the ancestry of a random sample of $n$ genomic sequences. These ancestral relationships are represented by a genealogy or tree; the times at which two sequences or lineages merge into a common ancestor are called coalescent times. The coalescent with variable population size can be viewed as a non-homogeneous Markov death process that starts with $n$ lineages at present time $t_{n}=0$ and decreases by one, with time running backwards, until reaching one lineage at $t_{1}$, at which point the samples have been traced to their most recent common ancestor \citep{griffiths_sampling_1994}.
\begin{figure}
  \begin{center}
\includegraphics[scale=.7]{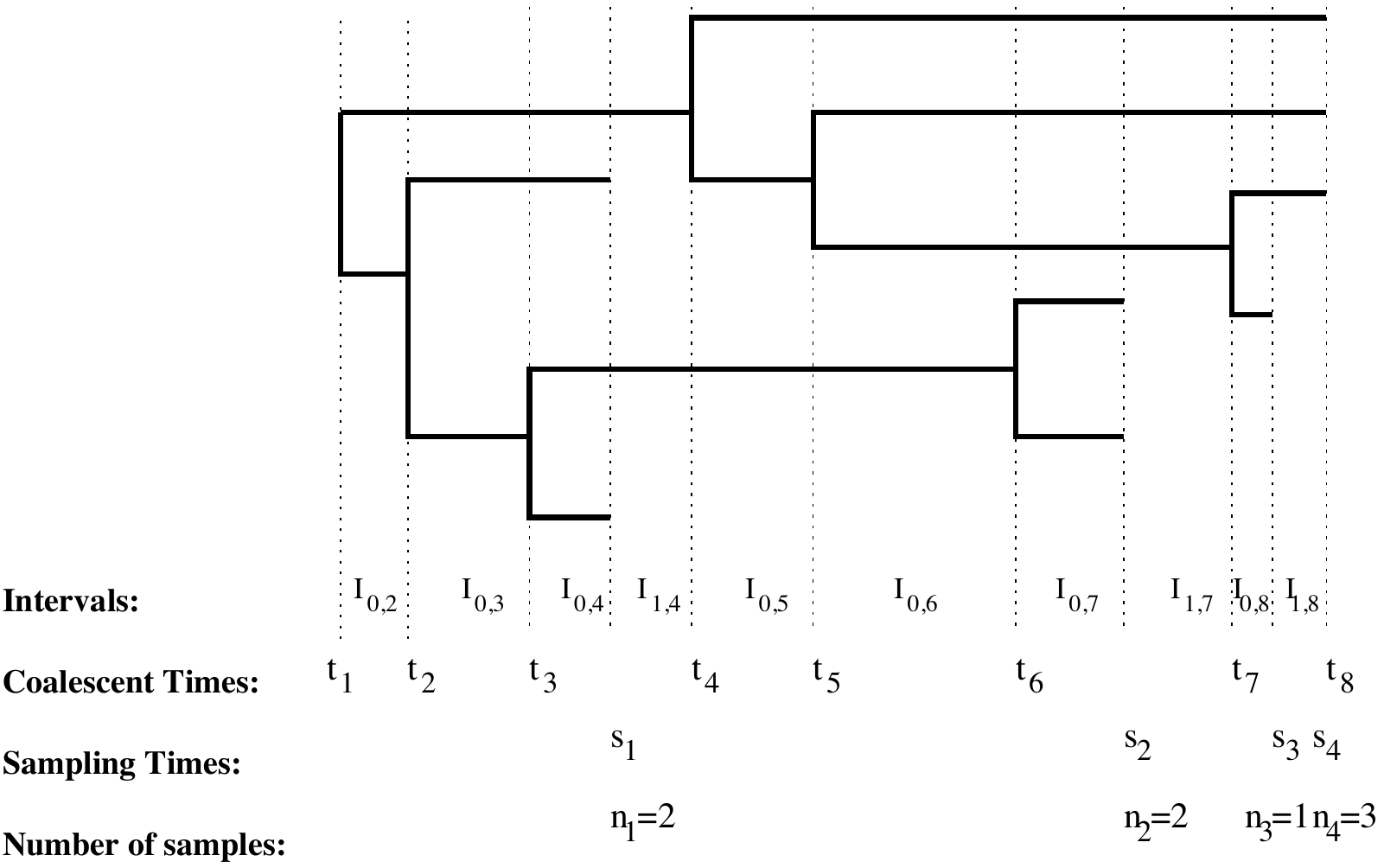}
  \end{center}
\caption{\small{Example of a genealogy relating serially sampled sequences (heterochronous sampling). The number of lineages changes every time we move between intervals ($I_{i,k}$). Each endpoint of an interval is a coalescent time ($\{t_{k}\}^{n}_{k=1})$ or a sampling time ($\{s_{j}\}^{m}_{j=1}$). The number of sequences sampled at time $s_{j}$ is denoted by $n_{j}$. }}
\end{figure}
Here, we assume that a genealogy with time measured in units of generations is observed. The shape of the genealogy depends on the effective population size trajectory, $N_{e}(t)$, and the number of samples accumulated through time: the larger the effective population size, the longer two lineages need to wait to meet a common ancestor and the larger the sample size, the faster two lineages coalesce.  Formally, let $t_{n}=0$ denote the present time when all the $n$ available sequences are sampled (\textit{isochronous coalescent}) and let $t_{n}=0<t_{n-1}<\cdots<t_{1}$ denote the coalescent times of lineages in the genealogy with time going backwards. Then, the conditional density of the 
coalescent time $t_{k-1}$ takes the following form: 
\begin{equation}
P[t_{k-1}|t_{k},N_{e}(t)]=\frac{C_{k}}{N_{e}(t_{k-1})} \exp \left\lbrace  -\int^{t_{k-1}}_{t_{k}}\frac{C_{k}}{N_{e}(t)}dt \right\rbrace  , 
\end{equation}
 where $C_{k}=\binom{k}{2}$ is the coalescent factor that depends on the number of lineages $k=2,\ldots,n$.
 \par
The \textit{heterochronous coalescent} arises when samples of sequences are collected at different times (Figure 1). Such serially sampled data are common in studies of rapidly evolving viruses and analyses of ancient DNA \citep{campos_2010}. Let $t_{n}=0<t_{n-1}<\cdots<t_{1}$ denote the coalescent times as before, but now  let $s_{m}=0<s_{m-1}<\cdots<s_{1}<s_{0}=t_{1}$ denote  sampling times of $n_{m},\ldots,n_{1}$ sequences respectively, $\sum^{m}_{j=1}n_{j}=n$. Further, let $\mathbf{s}$ and $\mathbf{n}$ denote the vectors of sampling times and numbers of sequences sampled at these times, respectively (Figure 1). Now, the coalescent factor changes not only at the coalescent events but also at the sampling times. Let 
\begin{equation}
I_{0,k}=(max\{t_{k},s_{j}\},t_{k-1}] \text{, for } s_{j}<t_{k-1} \text{ and }k=2,\dots,n,
\end{equation}
be the intervals that end with a coalescent event and
\begin{equation}
I_{i,k}=(max\{t_{k},s_{j+i}\},s_{j+i-1}] \text{, for } s_{j+i-1}>t_{k} \text{ and } s_{j}<t_{k-1}, k=2,\ldots,n,
\end{equation}
be the intervals that end with a sampling event. We denote the number of lineages in $I_{i,k}$ with $n_{i,k}$. 
Then, for $k=2,...,n$,
\begin{equation}
P[t_{k-1}|t_{k},\mathbf{s},\mathbf{n},N_{e}(t)]=\frac{C_{0,k}}{N_{e}(t_{k-1})} \exp -\left\lbrace\int_{I_{0,k}} \frac{C_{0,k}}{N_{e}(t)}dt+\sum^{m}_{i=1}\int_{I_{i,k}} \frac{C_{i,k}}{N_{e}(t)}dt\right\rbrace,
\end{equation}
where $C_{i,k}=\binom{n_{i,k}}{2}$. That is, the density for the next coalescent time $t_{k-1}$ is the product of the density of the coalescent time $t_{k-1} \in I_{0,k}$ and the probability of not having a coalescent event during the period of time spanned by intervals $I_{1,k},\ldots,I_{m,k}$ \citep{joseph_coalescent_1999}.
\subsection{Gaussian Process Prior for Population Size Trajectories}
For both isochronous or heterochronous data, we place the same prior on $N_{e}(t)$:
\begin{equation}
N_{e}(t)=\left[ \frac{\lambda}{1+\exp\left\lbrace -f(t)\right\rbrace }\right]^{-1},
\end{equation}
where
\begin{equation}
f(t)\sim \mathcal{GP}(\mathbf{0},\mathbf{C}(\mbox{\boldmath$\theta$}))
\end{equation}
and $\mathcal{GP}(\mathbf{0},\mathbf{C}(\mbox{\boldmath$\theta$}))$ denotes a Gaussian process with mean function $\mathbf{0}$ and covariance function $\mathbf{C}(\mbox{\boldmath$\theta$})$ with hyperparameters {\boldmath$\theta$}. \textit{A priori}, $1/N_{e}(t)$ is a Sigmoidal Gaussian Process,  a scaled logistic function of a Gaussian process whose range is restricted to lie in $[0,\lambda]$; $\lambda$ is a positive constant hyperparameter, the inverse of which serves as a lower bound of $N_{e}(t)$ \citep{adams_tractable_2009}.
\par
A \textit{Gaussian process} is a stochastic process such that any finite sample from the process has a joint Gaussian distribution. The process is completely specified by its mean and covariance functions \citep{rasmussen_gaussian_2005}. For computational convenience we use Brownian motion as our Gaussian process prior.
Generating a finite sample from a Gaussian processes requires $\mathcal{O}(n^{3})$ computations due to the inversion of the covariance matrix. However, when the precision matrix, the inverse of the covariance, is sparse, such simulations can be accomplished much faster \citep{rue_gaussian_2005}. For example, when we choose to work with a Brownian motion with covariance matrix elements $C(t,t^{'})=\frac{1}{\theta}(min(t, t^{'}))$ and precision parameter $\theta$, then the inverse of this matrix is tri-diagonal, which reduces the computational complexity of simulations from $\mathcal{O}(n^{3})$ to $\mathcal{O}(n)$. In our MCMC algorithm, we need to generate realizations from the Gaussian processes at thousands of points, so the speed-up afforded by the Brownian motion becomes almost a necessity, prompting us to use this process as our prior in all our examples.
\subsection{Priors for Hyperparameters}
The precision parameter $\theta$ controls the degree of autocorrelation of our Brownian motion prior and influences the ``smoothness" of the reconstructed population size trajectories. We place on $\theta$ a Gamma prior distribution with parameters $\alpha$ and $\beta$. The other hyperparameter in our model is the upper bound of $1/N_{e}(t)$, $\lambda$. When this upper bound $\lambda$ 
is unknown, the model is unidentifiable (see Equation (5)). However, in many circumstances it is possible 
to obtain an upper bound $\lambda$ (or equivalently, a lower bound on $N(t)$) from previous studies and use this value to define 
the prior distribution of $\lambda$. We use the following strategy to construct an informative prior for $\lambda$. Let $\hat{\lambda}$ denote our best guess of the upper bound, possibly obtained from previous studies. Then, the prior on $\lambda$ is a mixture of 
a uniform distribution for values to the left of $\hat{\lambda}$ and an exponential distribution to the right:
\begin{equation}
P(\lambda) =  \epsilon \frac{1}{\hat{\lambda}} I_{\left\lbrace \lambda<\hat{\lambda}\right\rbrace }+ (1-\epsilon) \frac{1}{\hat{\lambda}}e^{ -\frac{1}{\hat{\lambda}}(\lambda-\hat{\lambda})}I_{\left\lbrace \lambda \geq \hat{\lambda}\right\rbrace },
\end{equation} 
where $\epsilon>0$ is a mixing proportion. When $\hat{\lambda}$ is considerably smaller than the unknown $\lambda$, the recovered curve will be visibly truncated around $\hat{\lambda}$, indicating that one needs to try higher values of $\hat{\lambda}$.
\subsection{Doubly Intractable Posterior}
Coalescent times $\mathcal{T}=\{t_{n},t_{n-1},\ldots,t_{1}\}$ of a given genealogy contain information needed to estimate $N_{e}(t)$ (see Equations (1) and (4)). Given that $N_{e}(t)$ is a one-to-one function of $f(t)$ (Equation (5)), we will focus the discussion on the inference of $f(t)$. The posterior distribution of $f(t)$ and hyperparameters $\theta$ and $\lambda$ becomes
\begin{equation}
P(f(t),\theta,\lambda|\mathcal{T}) \propto P(\mathcal{T}|\lambda,f(t))P(f(t)|\theta)P(\theta)P(\lambda) ,
\end{equation}
where $P(f(t)|\theta)$ is a Gaussian process prior with hyperparameter $\theta$ and
\begin{equation}
P(\mathcal{T}|\lambda,f(t) )=\prod^{n}_{k=2}\frac{C_{k}\lambda}{1+\exp\left\lbrace -f(t_{k-1})\right\rbrace }\exp \left[   -C_{k}\int^{t_{k-1}}_{t_{k}}\frac{\lambda}{1+\exp\left\lbrace -f(t)\right\rbrace }dt\right] 
 \end{equation}
is the likelihood function for the isochronous data (heterochronous data likelihood has a similar form). The integral in the exponent of Equation (9) and the normalizing constant of Equation (8) are computationally intractable, making the posterior doubly intractable \citep{murray_mcmc_2006}.
\par
\citet{adams_tractable_2009} faced a similar doubly intractable posterior distribution in the context of nonparametric estimation of intensity of the inhomogeneous Poisson process. These authors propose an introduction of latent variables so that the augmented data likelihood becomes tractable. This tractability makes the posterior distribution of latent variables and model parameters amenable to standard MCMC algorithms. Since \citet{adams_tractable_2009} based their data augmentation on the thinning algorithm for simulating inhomogeneous Poisson processes, we would like to devise a similar data augmentation based on a thinning algorithm for simulation of the coalescent with variable population size. In this simulation, we envision generating coalescent times assuming a constant population size and then thinning these times so that the distribution of the remaining (non-rejected) coalescent times follows the coalescent with variable population size. Since no thinning algorithm for simulating the coalescent process exists, we develop a series of such algorithms. In developing these algorithms, we find it useful to view the coalescent as a point process, a representation that we discuss below.
\subsection{The Coalescent as a Point Process}
The joint density of coalescent times is obtained by multiplying the conditional densities defined in Equations (1) or (4). This density can be expressed as
\begin{equation}
P(t_{1},\ldots,t_{n-1}|N_{e}(t))=\prod^{n}_{k=2}\lambda^{*}(t_{k-1}|t_{k})\exp\left\lbrace -\int^{t_{k-1}}_{t_{k}}\lambda^{*}(t|t_{k})dt\right\rbrace ,
\end{equation}
where $\lambda^{*}(t|t_{k})$ denotes the conditional intensity function of a point process on the real line \citep{daley_introduction_2002}. For isochronous coalescent, the conditional intensity is defined by the step function:
\begin{equation}
\lambda^{*}(t|t_{k})=\binom{k}{2}N_{e}(t)^{-1}  1_{\{ t \in (t_{k}, t_{k-1}]\}}, \text{ for } k=2,\ldots,n,
\end{equation}
and the conditional intensity of the heterochronous coalescent point process is:
\begin{equation}
\lambda^{*}(t|\mathbf{n},\mathbf{s},t_{k})=\sum^{m}_{i=1}\binom{n_{i,k}}{2}N_{e}(t)^{-1} 1_{\{ t \in I_{i,k}\} }, \text{ for } k=2,\ldots,n.
\end{equation}
This novel representation allows us to reduce the task of estimating $N_{e}(t)$ to the estimation of the inhomogeneous intensity of the coalescent point process and to develop simulation algorithms based on thinning. 
\subsection{Coalescent Simulation via Thinning}
To the best of our knowledge, the only method available for simulating the coalescent under the deterministic variable population size model is a time transformation method \citep{slatkin_pairwise_1991, hein_gene_2005}. This method is based on the random time-change theorem due to \citet{papangelouo_integrability_1972}. %This result is based on the fact that given $t$, a realized time point from a Poisson process with unit intensity, $x=\Lambda^{*-1}(t)$ is a realized time point of a point process with bounded intensity function $\lambda^{*}(t)$, where $\Lambda^{*}(t)=\int^{t}_{0}\lambda^{*}(u)du$, and $\Lambda^{*}(t) \rightarrow \infty$ a.s.\\
Under the time transformation method, to simulate coalescent times, we proceed sequentially starting with $k=n$ and $t_{n}=0$, generating $t$ from an exponential distribution with unit mean, solving
\begin{equation}
t=\int^{t_{k-1}}_{t_{k}}\lambda^{*}(u|t_{k})du
\end{equation} 
for $t_{k-1}$ analytically or numerically and repeating the procedure until $k=2$. For isochronous coalescent,  $\lambda^{*}(u|t_{k})$ is defined in Equation (11) and for the heterochronous coalescent, $\lambda^{*}(u|t_{k})=\lambda^{*}(u|\mathbf{n},\mathbf{s},t_{k})$ is the piecewise function defined in Equation (12). When $N_{e}(t)$ is stochastic, the integral in Equation (13) becomes intractable and the time transformation method is no longer practical. Instead, we propose to use \textit{thinning}, a rejection-based method that does not require calculation of the integral in Equation (13).
\par
\citet{lewis_simulation_1978} proposed thinning a homogeneous Poisson process for the simulation of an inhomogeneous Poisson process with intensity $\lambda(t)$. The idea is to start with a realization of points from a homogeneous Poisson process with intensity $\lambda$ and accept/reject each point with acceptance probability $\lambda(t)/\lambda$, where $\lambda(t) \leq \lambda$. The collection of accepted points forms a realization of the inhomogeneous Poisson process with conditional intensity $\lambda(t)$. \citet{ogata2002lewis} extended Lewis and Shedler's thinning for the simulation of any point process that is absolutely continuous with respect to the standard Poisson process. We develop a series of thinning algorithms for the coalescent process that are similar to Ogata's algorithms, but not identical to them. Algorithm 1 outlines the simulation of $n$ coalescent times under the isochronous sampling. Given $t_{k}$, we start generating and accumulating exponential random numbers $E_{i}$ with rate $C_{k}\lambda$, until $t_{k-1}=t_{k}+E_{1}+E_{2}+\ldots$ is accepted with probability $1/N_{e}(t_{k-1})\lambda$. (see Web Appendix A for details and simulation algorithms of coalescent times for heterochronous sampling). In order to ensure convergence of the algorithm, we require $\int^{\infty}_{0} \frac{du}{N_{e}(u)} = \infty$ a.s., which is equivalent to requiring that all sampled lineages can be traced back to their single common ancestor with probability 1. Notice that $N_{e}(t)$ can be either deterministic or stochastic. The latter case is considered in Web Supplementary Algorithm 2, where we work with $f(t)$ instead of $N_{e}(t)$ for notational convenience. 
\par
If $N_{e}(t)$ is deterministic and equation (13) can be solved analytically, the time transformation method is likely to be more efficient that thinning since the thinning algorithm is  an accept/reject algorithm with the acceptance probability highly dependent on the definition of $\lambda$. However, efficiency of the thinning algorithm can be improved by replacing the constant upper bound $\lambda$ on $1/N_{e}(t)$, by a piece-wise constant or a piece-wise linear function of local upper bounds in order to achieve higher acceptance probabilities, similarly to the adaptive rejection sampling of \citet{Gilks1992}.
\renewcommand{\algorithmicrequire}{\textbf{Input:}} 
\renewcommand{\algorithmicensure}{\textbf{Output:}}
\linespread{0.9}
\begin{algorithm}
\caption{Simulation of isochronous coalescent times by thinning - $N_{e}(t)$ is a deterministic function} 
%\label{alg1} 
\begin{algorithmic} [1]
\REQUIRE {$k=n$, $t_{n}=0$, $t=0$, $1/N_{e}(t) \leq \lambda$, $N_{e}(t)$}
\ENSURE {$\mathcal{T}=\{t_{k}\}^{n}_{k=1}$}
\WHILE {$k>1$}
\STATE Sample $E\sim Exponential (C_{k}\lambda)$ and $U\sim U(0,1)$
\STATE t=t+E
\IF{$U \leq \frac{1}{N_{e}(t)\lambda} $}
\STATE $k \leftarrow k-1$,  $t_{k} \leftarrow t$
\ENDIF
\ENDWHILE
\end{algorithmic} 
\end{algorithm}
\renewcommand{\baselinestretch}{1.8}
\vspace{-0.4cm}
\subsection{Data Augmentation and Inference}
As mentioned in the previous section, our thinning algorithm for the coalescent is motivated by our desire to construct a data augmentation scheme. We imagine that observed coalescent times $\mathcal{T}$ were generated by the thinning procedure described in Algorithm 1, so we augment $\mathcal{T}$ with rejected (thinned) points $\mathcal{N}$. If we keep track of the rejected points resulting from Algorithm 1, then, given $t_{k}$, $f(t_{k}), \mathbf{f}_{\mathcal{N}_{k}}=\left\lbrace f(t_{k,i})\right\rbrace^{m_{k}}_{i=1}$ and $\lambda$, we start proposing new time points $\mathcal{N}_{k}=\{t_{k,1},\ldots,t_{k,m_{k}}\}$ until $t_{k-1}$ is accepted, so that
\begin{eqnarray}
P(t_{k-1},\mathcal{N}_{k}|t_{k},f(t_{k-1}), \mathbf{f}_{\mathcal{N}_{k}},\lambda)=(C_{k}\lambda) ^{m_{k}+1}\exp \left\lbrace -C_{k} \lambda (t_{k}-t_{k-1})\right\rbrace \left[\frac{1}{1+\exp\left\lbrace -f(t_{k-1})\right\rbrace}\right]  \notag \\ \times\prod^{m_{k}}_{i=1} \left[1-\frac{1}{1+\exp\left\lbrace -f(t_{k,i})\right\rbrace }\right].
\end{eqnarray}
For the heterochronous coalescent (see Algorithm 3 in Web Supplemental Materials), Equation (14) is modified in the following way: 
\begin{eqnarray}
P(t_{k-1},\mathcal{N}_{k}|t_{k},f(t_{k-1}), \mathbf{f}_{\mathcal{N}_{k}},\lambda,\mathbf{s},\mathbf{n})=
(\lambda C_{0,k})^{1+m_{0,k}} \exp\{-\lambda C_{0,k}l(I_{0,k})\} \notag\\
\times \left[ \left( \frac{1}{1+\exp\left\lbrace  -f(t_{k-1})\right\rbrace  } \right) \prod^{m_{k}}_{i=1}\frac{1}{1+\exp\left\lbrace f(t_{k,i})\right\rbrace } \right]\prod^{m}_{i=1} \left[  (\lambda C_{i,k})^{m_{i,k}} \exp\{-\lambda C_{i,k}l(I_{i,k})\}\right],
\end{eqnarray}
where $l(I_{i,k})$ denotes the length of the interval $I_{i,k}$ and $m_{i,k}=\sum^{m_{k}}_{l=1} 1 \left\lbrace  t_{k,l} \in I_{i,k} \right\rbrace$ denotes the number of latent points in interval $I_{i,k}$ .
Let $\mathbf{f}_{\mathcal{T},\mathcal{N}}=\left\{\left\lbrace f(t_{k}) \right\rbrace^{n}_{k=1}, \left\lbrace\left\lbrace f(t_{k,i}) \right\rbrace^{m_{k}}_{i=1}\right\rbrace^{n}_{k=2}\right\}$, then the augmented data likelihood of $\mathcal{T}$ and $\mathcal{N}$ becomes
\begin{equation}
P(\mathcal{T},\mathcal{N}|\mathbf{f}_{\mathcal{T},\mathcal{N}},\lambda)=\prod^{n}_{k=2}P(t_{k-1},\mathcal{N}_{k}|t_{k},f(t_{k-1}), \mathbf{f}_{\mathcal{N}_{k}},\lambda).
\end{equation}
Then, the posterior distribution of $f(t)$ and hyperparameters evaluated at the observed $\mathcal{T}$ and latent $\mathcal{N}$ time points is
\begin{equation}
P(\mathbf{f}_{\mathcal{T},\mathcal{N}},\lambda,\theta|\mathcal{T},\mathcal{N}) \propto P(\mathcal{T},\mathcal{N}|\mathbf{f}_{\mathcal{T},\mathcal{N}},\lambda) P(\mathbf{f}_{\mathcal{T},\mathcal{N}}|\theta)P(\lambda)P(\theta).
\end{equation}
The augmented posterior can now be easily evaluated since it does not involve integration of infinite-dimensional random functions.
We follow \cite{adams_tractable_2009} and develop a MCMC algorithm to sample from the posterior distribution (16).  
At each iteration of our MCMC, we update the following variables:  (1) number of ``rejected" points $\#\mathcal{N}$; (2) the locations of the rejected points $\mathcal{N}$; (3) the function values $\mathbf{f}_{\mathcal{T},\mathcal{N}}$ and (4) the hyperparameters $\theta$ and $\lambda$. We use a Metropolis-Hastings algorithm to sample the number and locations of latent points and the hyperparameter $\lambda$; we Gibbs sample the hyperparameter $\theta$. Updating the function values $\mathbf{f}_{\mathcal{T},\mathcal{N}}$ is nontrivial, because this high dimensional vector has correlated components. In such cases, single-site updating is inefficient and block updating is preferred \citep {rue_gaussian_2005}. We use elliptical slice sampling, proposed by \citet{murray_elliptical_2010}, to sample $\mathbf{f}_{\mathcal{T},\mathcal{N}}$. The advantage of using the elliptical slice sampling proposal is that it does not require the specification of tunning parameters and works well in high dimensions. The details of our MCMC algorithm can be found in Web Appendix B. 
\par
We summarize the posterior distribution of $N_{e}(t)$ by its empirical median and $95\%$ Bayesian credible intervals (BCIs) evaluated at a grid of points. This grid can be made as fine as necessary after the MCMC is finished. That is, given the function values $\mathbf{f}_{\mathcal{T},\mathcal{N}}$ at coalescent and latent time points, and the value of the precision parameter $\theta$ at each iteration,  we sample the function values at a grid of points $\mathbf{g}=\left\lbrace g_{1},...,g_{B}\right\rbrace$ from its predictive distribution $\mathbf{f}_{\mathbf{g}} \sim P(\mathbf{f}_{\mathbf{g}}|\mathbf{f}_{\mathcal{T},\mathcal{N}},\theta)$, and evaluate $\left\lbrace N_{e}(g_{i}) \right\rbrace^{B}_{i=1}$.
\section{Results}
\subsection{Simulated Data}
We simulate three genealogies relating $100$ individuals, sampled at the same time $t=0$ (isochronous sampling) under the following demographic scenarios:
%\begin{enumerate}
%\item 
1) constant population size trajectory: $N_{e}(t)=1$; 
%\item 
2) exponential growth: $N_{e}(t)=25e^{-5t}$; and
%\item 
3) population expansion followed by a crash: $N_{e}(t)=e^{4t} 1_{\left\lbrace  t \in [0,0.5]\right\rbrace }+ e^{-2t+3} 1_{\left\lbrace t \in (0.5,\infty)\right\rbrace }.$
%\begin{equation}
% N_{e}(t)=\begin{cases} e^{4t} & t \in [0,0.5], \\
%           e^{-2t+3} & t \in (0.5,\infty). \end{cases}
%\end{equation}
%\end{enumerate} 
We compare the posterior median with the truth by the sum of relative errors (SRE):
\begin{equation}
SRE=  \sum^{K}_{i=1}\frac{|\hat{N}_{e}(s_{i})-N_{e}(s_{i})|}{N_{e}(s_{i})},
\end{equation}
where $\hat{N}_{e}(s_{i})$ is the estimated trajectory at time $s_{i}$ with $s_{1}=t_{1}$, the time to the most recent common ancestor and $s_{K}=t_{n}=0$ for any finite $K$. Similarly, we compute the mean relative width (MRW) of the $95\%$ BCIs defined in the following way:
\begin{equation}
MRW=  \sum^{K}_{i=1}\frac{|\hat{N}_{97.5}(s_{i})-\hat{N}_{2.5}(s_{i})|}{K N_{e}(s_{i})}.
\end{equation}
We also compute the percentage of time, the $95\%$ BCIs cover the truth (envelope) in the following way:
\begin{equation}
 envelope=\frac{\sum^{K}_{i=1}I(\hat{N}_{2.5}(s_{i})\leq N(s_{i}) \leq \hat{N}_{97.5}(s_{i}))}{K}.
\end{equation}
As a measure of the frequentist coverage, we calculate the percentage of times the truth is completely covered by the $95\%$ BCIs (envelope = 1), by simulating each demographic scenario and performing Bayesian estimation of each such simulation $100$ times.
\par
We compute the three statistics for the three simulation scenarios for $K=150$ at equally spaced time points (Table 1). These statistics do not change significantly when we use higher values of $K$.
Additionally, we compute the variation of $\hat{N}_{e}(t)$ over a regular grid of $K=150$ points as follows: 
\begin{equation}
variation=\sum^{K-1}_{i=1}|\hat{N}_{e}(s_{i+1})-\hat{N}_{e}(s_{i})|,
\end{equation}
\par
For all simulations, we set the mixing parameter $\epsilon$ of the prior density for $\lambda$ (Equation (7))  to $\epsilon=0.01$. The parameters of the Gamma prior on the GP precision parameter $\theta$ were set to $\alpha=\beta=0.001$. We summarize our posterior inference in Figure 2 and compare our GP method to the GMRF smoothing method \citep{minin_smooth_2008}. The effective population trajectory is log transformed and time is measured in units of generations.
\par 
For the constant population scenario (first row in Figure 2),  the truth (dashed lines) is almost perfectly recovered by the GP method (solid black line) and the $95\%$ BCIs shown as gray shaded areas are remarkably tight. For the exponential growth simulation (second row), the GMRF method recovers the truth better in the right tail, while our GP method recovers it much better in the left tail. The higher variation of the GP reconstruction in the right tail makes this measure higher than for the GMRF reconstruction. Overall, our GP method better recovers the truth in the exponential growth scenario, as evidenced by SREs and MRWs in Table 1. The last row in Figure 2 shows the results for a population that experiences expansion followed by a crash in effective population size. In this case, $95\%$ BCIs of the two methods do not completely cover the true trajectory. While an area near the bottleneck is particularly problematic, the GP method's envelope is much higher ($92\%$) than the envelope produced by the GMRF method ($77.3\%$), the variation recovered by the GP method is closer to the true variation in all simulation scenarios and in general, in terms of the four statistics employed here, the GP method shows better performance. Results for the GMRF method were obtained using the \texttt{BEAST} software \citep{latest_beast} with running times ranging from 25 to 40 minutes, while results for the GP method were obtained using \texttt{R} with running times ranging from 60 to 180 minutes. Although our GP implementation takes longer, we obtain better performance in a still reasonable amount of time.
\linespread{0.9}
\begin{figure}
\vspace{-25pt}
  \begin{center}
 \includegraphics[scale=.95]{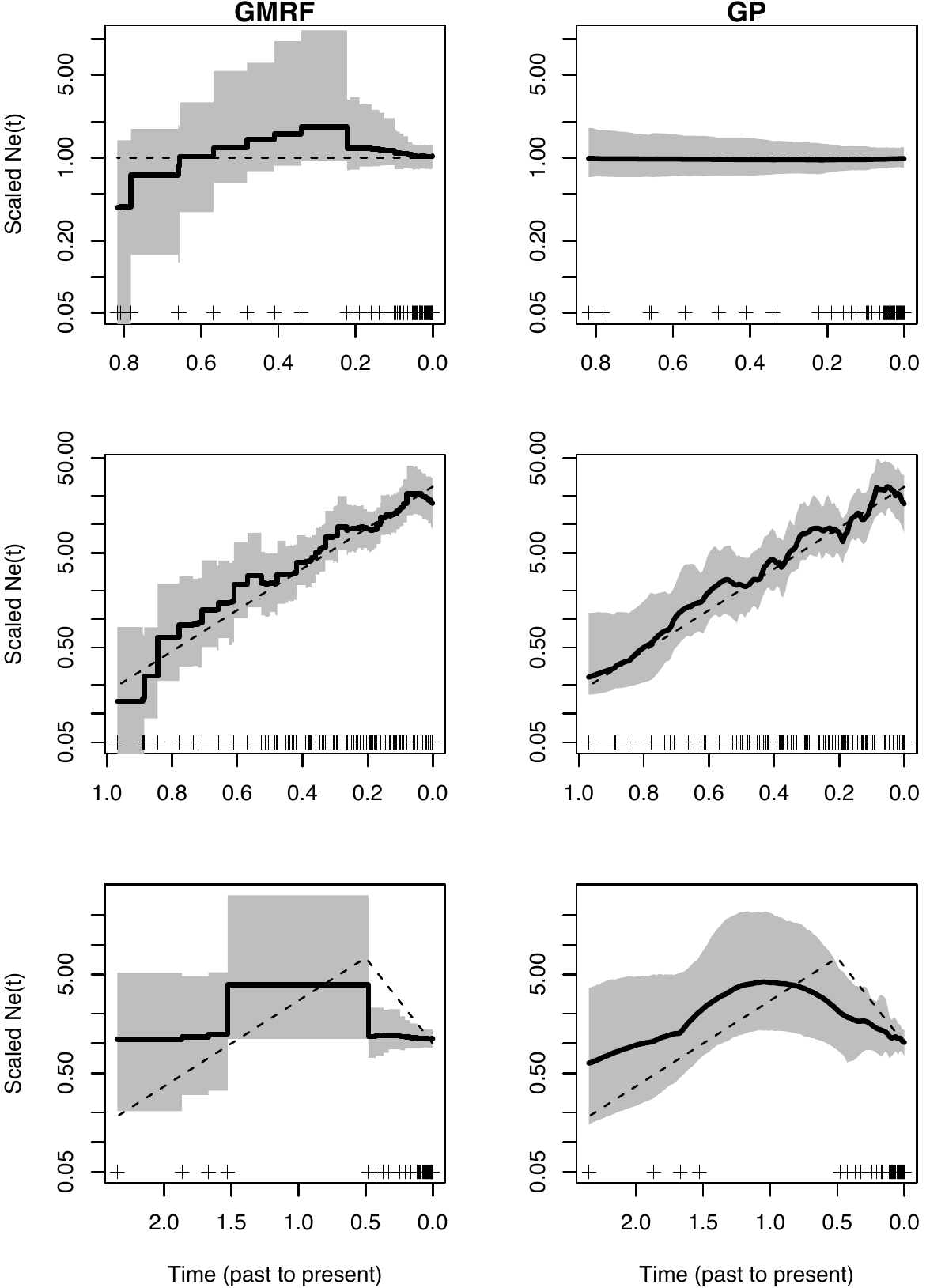}
  \end{center}
  \vspace{-10pt}
  \caption{\small{Simulated data under the constant population size (first row), exponential growth (second row) and expansion followed by a crash (third row). The simulated points are represented by the points at the bottom of each plot. We show the log of the effective population size trajectory estimated under the Gaussian Markov random field smoothing (GMRF) method and our method: Gaussian process-based nonparametric inference of effective population size (GP). We show the true trajectories as dashed lines, posterior medians as solid black lines and $95\%$ BCIs by gray shaded areas. }}
\end{figure}
\linespread{1.0}
\begin{table}
\caption{\small{Summary of Simulation Results Depicted in Figure 2. SRE is the sum of relative errors as defined in Equation (19), MRW is the mean relative width of the 95\% BCI as defined in Equation (20), envelope is calculated as in Equation (21) and variation is calculated as in Equation (22). }}
%\centering \begin{tabular}{l pr{30cm}} 
\centering \tabcolsep=0.05cm  \begin{tabular}{l c c c c c c c c c c c c c} 
\hline\hline
Simulations & \multicolumn{4}{c}{SRE} & \multicolumn{3}{c}{MRW} &  \multicolumn{2}{c}{Envelope} &  \multicolumn{4}{c}{Variation}\\
\cline{3-4} \cline{6-7} \cline{9-10} \cline{12-14}
%Simulation & \multicolumn{2}{|c c|}{SRE} & \multicolumn{2}{|c c|}{MRW} & \multicolumn{2}{|c c|}{Coverage}\\
& & GMRF & GP & & GMRF & GP & & GMRF & GP & & GMRF & GP & TRUTH\\
\hline
%\hline
Constant &  &50.41 &\textbf{4.15}& &4.21&\textbf{0.72} & & \textbf{100.0\%} &\textbf{100.0\%} & & 2.27 &\textbf{0.08} & 0.00\\
Exp. growth & &47.65&\textbf{33.60}& &2.55&\textbf{2.35} & &\textbf{100.0\%} &\textbf{100.0\%} & & \textbf{30.19} &52.41 & 24.80\\
Expansion/crash & &181.88&\textbf{140.88} & &10.7&\textbf{7.26}& & 77.33\% &\textbf{92.0\%} & & 5.69 &\textbf{6.94} & 13.46\\[1ex] 
\hline 
\end{tabular}
\end{table}
\par
Next, we simulate each of the three scenarios $100$ times and compute \textbf{the four} statistics described before for both methods. The distributions of these statistics are represented by the boxplots depicted in Figure 3. In general, our GP method has smaller SREs, except in the constant case, where the distributions look very similar; smaller MRWs, larger envelopes and variation closer to the truth. Additionally, we calculate the percentage of times, the envelope is 1 as a proxy for frequentist coverage of the 95\% BCIs. Since the 95\% BCIs are calculated pointwise at $150$ equally spaced points,  we do not necessarily expect frequentist coverage to be close to 95\%. The results are shown as the numbers at the top of the right plot in Figure 3. The coverage levels obtained using the GP method are larger than those obtained using the GMRF method.
\begin{figure}
\vspace{-25pt}
  \begin{center}
 \includegraphics[scale=.6]{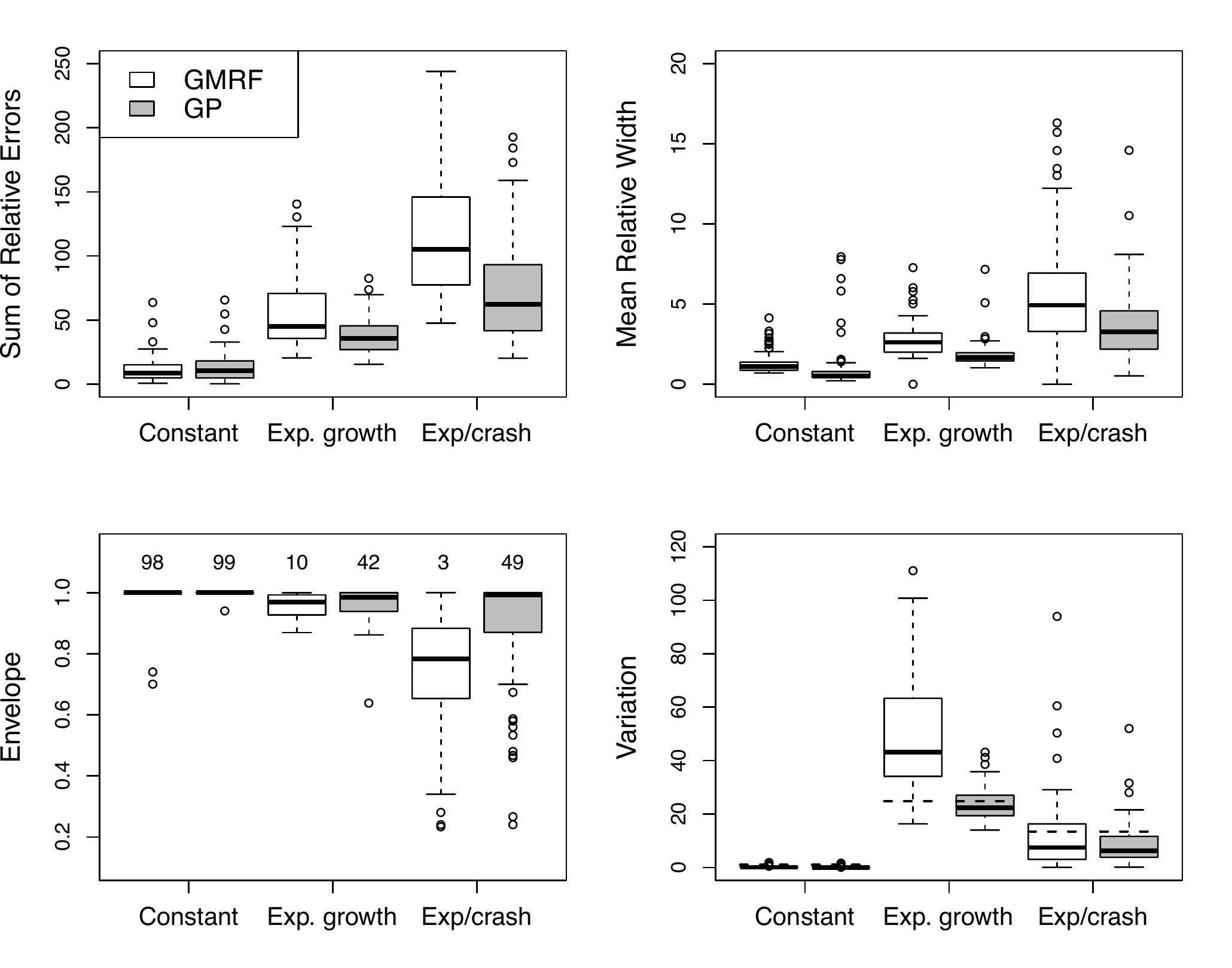}
  \end{center}
  \vspace{-10pt}
  \caption{\small{Boxplots of SRE (top left), MRW (top right), envelope (bottom left) and variation (bottom right)  based on 100 simulations for a constant trajectory, exponential growth and expansion followed by crash. The numbers above the boxplots of the bottom left plot represent the estimated frequentist coverage of the 95\% BCIs, and the dashed lines in the bottom right plot indicate variations of the true simulated trajectories.}}
\end{figure}
\subsection{Egyptian HCV}
Hepatitis C virus was first identified in 1989. By 1992, when HCV antibody testing became widely available,
the prevalence of HCV in Egypt was about $10.8\%$. Today, Egypt is the country with the highest HCV prevalence \citep{newHCV}.
A widely held hypothesis that can explain the epidemic emphasizes the role of \textbf{a} parenteral antischistosomal therapy (PAT) 
campaign, that started in the 1920s, combined with lack of sanitary practices. The campaign was discontinued in the 
1970s when the intravenous treatment was gradually replaced by oral administration of the treatment \citep{oldHCV}. Coalescent demographic
methods developed over the last 10 years demonstrated evidence in favor of this hypothesis \citep{Pybus01032003,drummond_bayesian_2005,minin_smooth_2008}. Therefore, this example is well suited for testing our method.
 We analyze the genealogy estimated by \citet{minin_smooth_2008}, based on $63$ HCV sequences sampled in 
Egypt in 1993, and compare our method to the GMRF smoothing method \citep{minin_smooth_2008}.
The results are depicted in Figure 4, with time scaled in units of years. In line with previous results \citep{Pybus01032003,drummond_bayesian_2005,minin_smooth_2008}, 
our estimation recovers the exponential growth of the HCV population size starting from the 1920s when the intravenously administered PAT was introduced. Both \citet{Pybus01032003} and \citet{minin_smooth_2008} hypothesize that the population trajectory
remained constant before the start of the exponential growth. The GMRF and GP approaches disagree the most on the HCV population size reconstruction prior to 1920s. The GP method produces narrower BCIs near the root of the genealogy (1710-1770) than the GMRF approach. In contrast, GP BCIs are inflated in the time period from 1770 to 1900 in comparison to the GMRF results. We believe that the uncertainty estimates produced by the GP approach are more reasonable than the GMRF result, because there are multiple coalescent events during 1710- 1770, providing information about the population size, while the time interval 1770 - 1900 has no coalescent events, a data pattern that should result in substantial uncertainty about the HCV population size. Another notable difference between the GMRF and GP methods is in estimation of the HCV population trajectory after 1970. The GP approach suggests a sharper decline in population size during this time interval. 
\begin{figure}
  \begin{center}
 \includegraphics[width=1.0\textwidth]{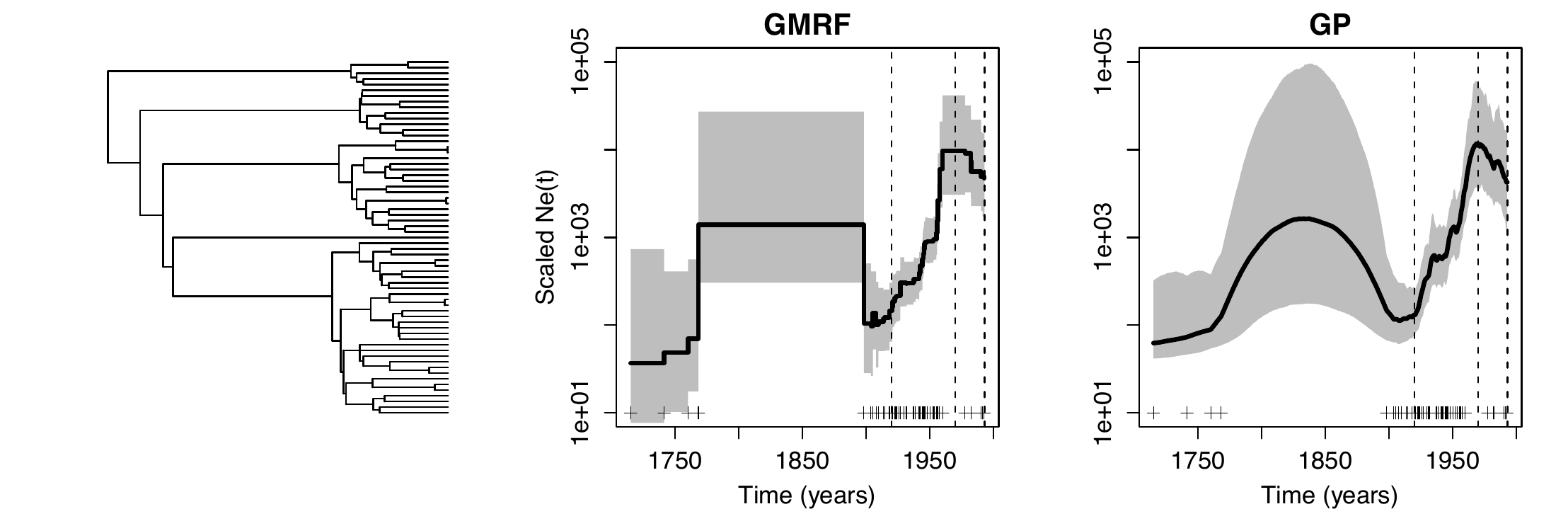}
  \end{center}
\caption{\small{Egyptian HCV. The first plot (left to right) is one possible genealogy reconstructed by \citet{minin_smooth_2008}. The next two plots represent the log of scaled effective population trajectory estimated using the GMRF smoothing method and our GP method. The posterior medians for the last two plots are represented by solid black lines and the $95\%$ BCI's are represented by the gray shaded areas. The vertical dashed lines mark the years 1920 (the start of intravenous PAT) , 1970 (the end of intravenous PAT) and 1993 (sampling time of sequences).}}
\end{figure}
\subsection{Seasonal Human Influenza}
Here, we estimate population dynamics of human influenza A, based on 288 H3N2 sequences sampled in New York state from January, 2001 to March, 2005. Sequences from the coding region of the influenza hemagglutinin (HA) gene of H3N2 influenza A virus from New York state were collected from the NCBI Influenza Database \citep{influenza_data}, incorporating the exact dates of viral sampling in weeks (heterochronous sampling) and aligned using the software package MUSCLE \citep{muscle}. These sequences form a subset of  sequences analyzed in \citep{rambaut_genomic_2008}. We carried out a phylogenetic analysis using the software package BEAST \citep{latest_beast} to generate a majority clade support genealogy with median node heights as our genealogical reconstruction. The reconstructed genealogy is depicted in the left plot of Figure~\ref{flu_figure}. Demography of H3N2 influenza A virus in temperate regions, such as New York, is characterized by epidemic peaks during winters followed by strong bottlenecks at the end of epidemic seasons. As expected, our method recovers the peaks in the effective number of infections during all seasons starting from the 2001-2002 flu season (flu seasons are represented as doted rectangles in Figure~\ref{flu_figure}). The GMRF method fails to recover the peak during the 2002-2003 season. The large uncertainty in population size estimation during the 1999-2000, 2000-2001, and at the beginning of 2005-2006 seasons is explained by the small number of coalescent events during those time periods, however, this uncertainty is larger in the GMRF recovered trajectory. During the 2001-2002 flu season, the GMRF method fails to recover the expected trajectory of a peak followed by a bottleneck and instead, this method recovers an epidemic that started during the end of 2001, increased and remained ``at peak" until the end of the following winter. The GMRF recovered trajectory during the winter season of 2003 exhibits a steep decrease. In contrast, the GP method detects a late peak during the 2001-2002 season, followed by a decline in the number of infections. There is a small bump in the effective population size of influenza in the winter of 2003, which is more realistic than a steady decline 
in the number of infections estimated by the GMRF method. Overall, we believe that the GP reconstructed trajectory is more feasible  from an epidemiological point of view than the GMRF population size reconstruction. 
\begin{figure}
  \begin{center}
 \includegraphics[width=1.0\textwidth]{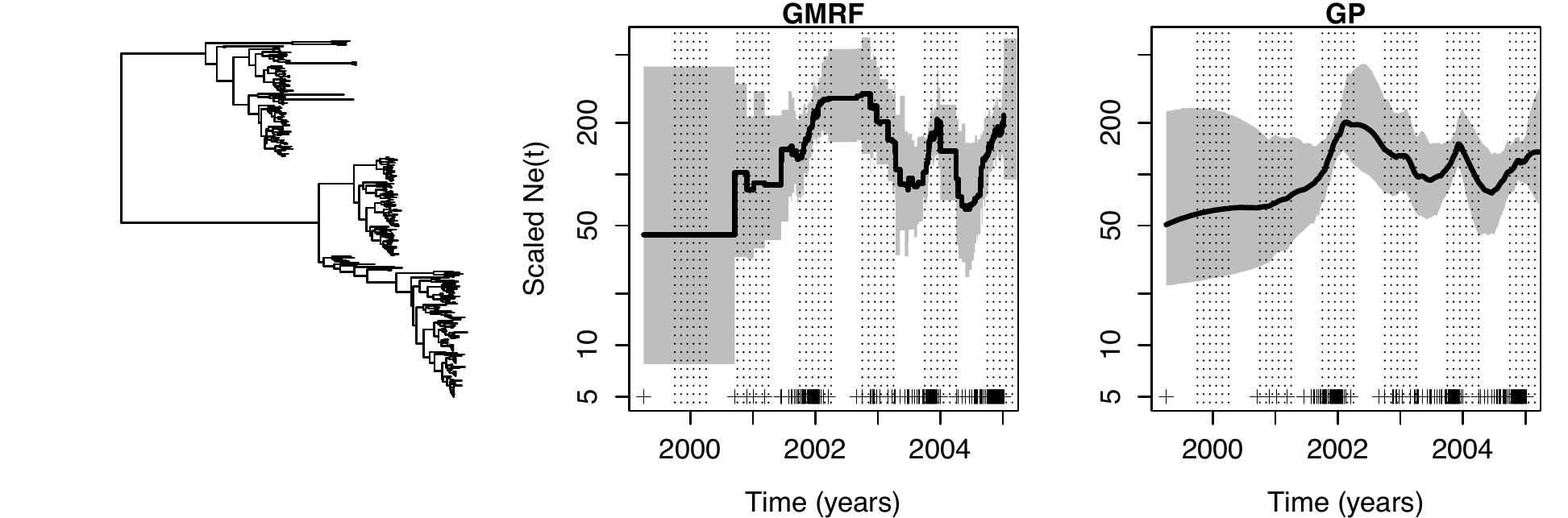}
  \end{center}
\caption{\small{H3N2 Influenza A virus in New York state. The first plot (left) is the estimated genealogy. The second and third plots are the GMRF and GP estimations of log scaled effective population trajectories. Winter seasons are represented by the doted shaded areas.  Posterior medians are represented by solid black lines and $95\%$ BCIs are represented by gray shaded areas.}}
\label{flu_figure}
\end{figure}

\section{Prior Sensitivity}
In all our examples, we placed a Gamma prior on the precision parameter $\theta$ with parameters $\alpha=0.001$ and $\beta=0.001$. This precision parameter, unknown to us \textit{a priori}, controls the smoothness of the GP prior. We investigate the sensitivity of our results to the Gamma prior specification using the Egyptian HCV data. In the first plot of Figure~\ref{sensitivity}, we show the prior and posterior distributions of $\theta$ under our default prior. The difference in densities suggests that prior choices do not have an impact on the posterior distribution. Since the mean of a Gamma distributed random variable is $\alpha/\beta$, we investigate the sensitivity by fixing $\beta=.001$ and setting the value of $\alpha$ to 0.001, 0.002, 0.005, 0.01 and 0.1, corresponding to prior means 1, 2, 5, 10 and 100 and variances 1000, 2000, 5000, 10000 and 100000, and by trying two extremes: $\alpha=1$,  $\beta=.0001$ and $\alpha=.001$, $\beta=1$, to examine the posterior distribution of $\theta$ under these priors. The posterior sample boxplots displayed in Figure 6 demonstrate that our results are fairly robust to different choices of $\alpha$. 
\begin{figure}[b]
  \begin{center}
 \includegraphics[width=\textwidth]{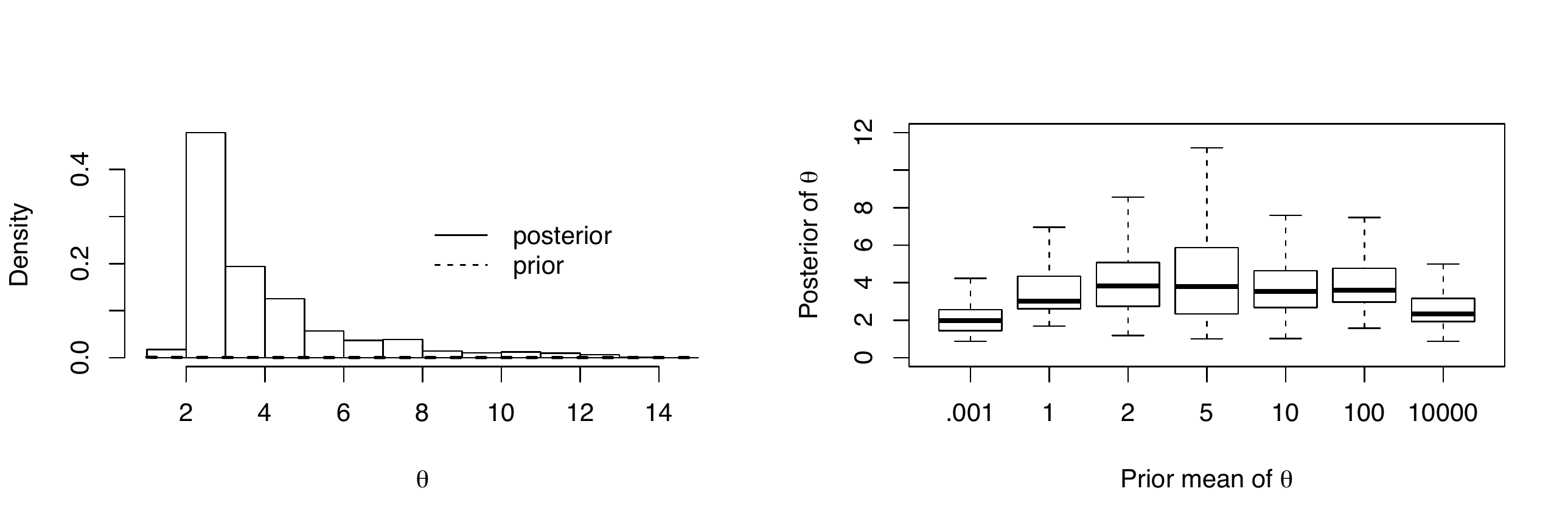}
  \end{center}
\caption{\small{Prior sensitivity on the GP precision parameter. Left plot shows the prior and posterior distributions represented by dashed line and vertical bars respectively. Right plot shows the boxplots of the posterior distributions of the precision parameter when the prior distributions differ in mean and variance of the precision parameter $\theta$. These plots are based on the Egyptian HCV data.}}
\label{sensitivity}
\end{figure}

\section{Sensitivity to the Order of the Gaussian Process}

We evaluate our GP-based method for three different Gaussian Process priors for the Egyptian HCV genealogy. In Figure 7, we show the recovered trajectories for Brownian Motion (BM), Ornstein-Uhlenbeck (OU) and approximated Integrated Brownian motion (IBM) \citep{Rw2}. The common characteristic of these three priors is the sparsity of their precision matrices (inverse covariance matrix), allowing for computational tractability. Figure 7 shows that the order of the process does make a difference, but only in regions with large posterior uncertainty, where prior influence is more pronounced.

\begin{figure}
\vspace{-25pt}
  \begin{center}
 \includegraphics[width=\textwidth]{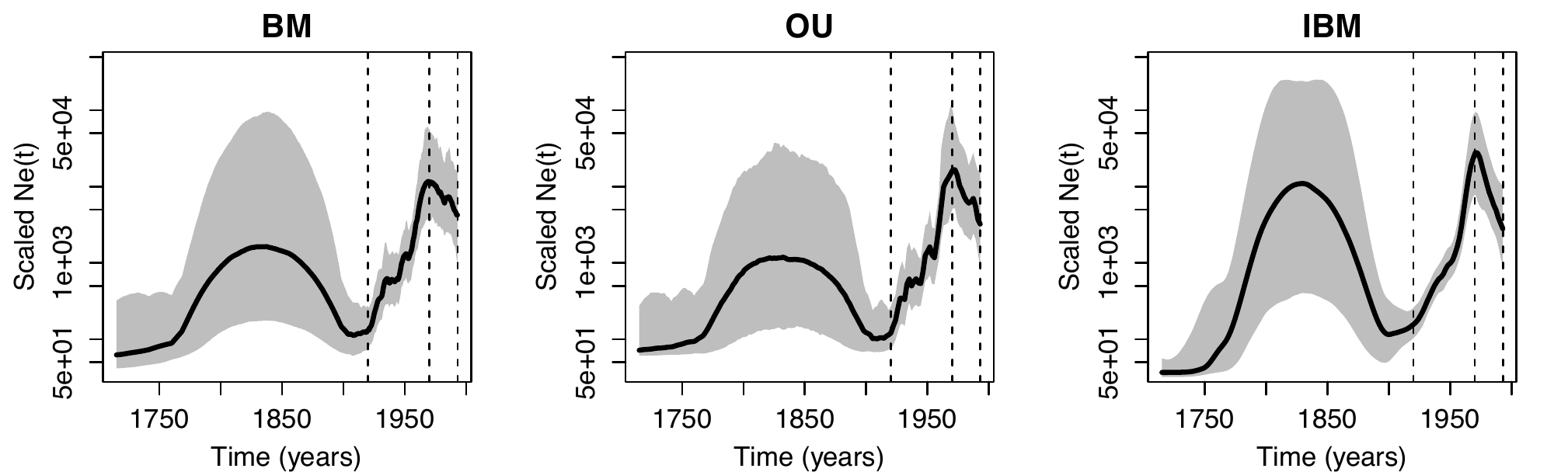}
  \end{center}
  \vspace{-10pt}
  \caption{\small{Egyptian HCV recoverd by placing three different Gaussian process priors. The first plot (left to right) corresponds to a Brownian motion (BM), the second -- to Ornstein-Uhlenbeck (OU) and the last one  -- to the approximated integrated  Brownian motion (IBM).}}
\end{figure}

\section{Discussion}
We propose a new nonparametric method for estimating population size dynamics from gene genealogies. To the best of our knowledge, we are the first to solve this inferential problem using modern tools from Bayesian nonparametrics. In our approach, we assume that the population size trajectory \textit{a priori} follows a transformed Gaussian process. This flexible prior allows us to model population size trajectory as a continuous function without specifying its parametric form and without resorting to artificial discretization methods. We tested our method on simulated and real data and compared it with the competing GMRF method. On simulated data, our method recovers the truth with better accuracy and precision. On real data, where true population trajectories are unknown, our method recovers known epidemiological aspects of the population dynamics and produces more reasonable estimates of uncertainty than the competing GMRF method.
\par
We bring Bayesian nonparametrics into the coalescent framework by viewing the coalescent as a point process. This representation allows us to adapt Bayesian nonparametric methods originally developed for Poisson processes to the coalescent modeling. In particular, it allows us to adapt the thinning-based data augmentation for Poisson processes developed by \citet{adams_tractable_2009}. We devise an analogous data augmentation for the coalescent by developing a series of new thinning algorithms for the coalescent. Although we use these algorithms in a very narrow context, our novel coalescent simulation protocols should be of interest to a wide range of users of the coalescent. For example, we are not aware of any competitors of our Web Supplementary Algorithms 2 and 4 that allow one to simulate coalescent times with a continuously and \textit{stochastically} varying population size.
\par
Our method works with any Gaussian process with mean $\mathbf{0}$ and covariance matrix $\mathbf{C}$, where the latter controls the level of smoothness and autocorrelation. For computational tractability however, sparsity in the precision matrix (inverse covariance matrix) may be necessary for complex trajectories with a high number of latent points. One way to achieve sparse matrix computations and computational tractability is to use GP that is also Markov. In all our examples, we use Brownian motion with precision parameter $\theta$; however, the nondifferentiability characteristic of the Brownian motion is compensated by the fact that our estimate of effective population trajectory is the posterior median evaluated pointwise, which is smoother than any of the sampled posterior curves. Additionally, we compared Brownian motion, Ornstein-Uhlenbeck and a higher order integrated Brownian motion for one of our examples and obtained very similar results under all three priors (see Web Supplementary Materials). Finite sample distributions under these three priors enjoy sparse precision matrices that yield computational tractability comparable to the GMRF method. In our Brownian motion prior, the precision parameter controls the level of smoothness of the estimated population size trajectory. We find that this important parameter shows little sensitivity to prior perturbations.
\par
Our method assumes that a genealogy or tree is given to the researcher. However, genealogies are themselves inferred from molecular sequences, so we need to incorporate genealogical uncertainty into our estimation. Our framework can be extended to inference from molecular sequences instead of genealogies by introducing another level of hierarchical modeling into our Bayesian framework, similar to the work of \citet{drummond_bayesian_2005} and \citet{minin_smooth_2008}. Further, we plan to extend our method to handle molecular sequence data from multiple loci as in \citep{heled_bayesian_2008}. Finally, we would like to extend our nonparametric estimation into a multivariate setting, so that we can estimate cross correlations between population size trajectories and external time series. Estimating such correlations is a critical problem in molecular epidemiology.
\par
We deliberately adapted the work of \citet{adams_tractable_2009} on estimating the intensity function of an inhomogeneous Poisson process, as opposed to alternative ways to attack this estimation problem \citep{mller_log_1998, kottas_bayesian_2005}, to the coalescent. We believe that among the state-of-the-art Bayesian nonparametric methods, our adopted GP-based framework is the most suitable for developing the aforementioned extensions. First, it is straightforward to incorporate external time series data into our method by replacing a univariate GP prior with a multivariate process that evolves the population size trajectory and another variable of interest in a correlated fashion \citep{Teh05semiparametriclatent}. Second, the fact that our method does not operate on a fixed grid is critical for relaxing the assumption of a fixed genealogy, because fixing the grid \textit{a priori} is problematic when one starts sampling genealogies, including coalescent times, during MCMC iterations.  
\par
Finally, since the coalescent model with varying population size can be viewed as a particular example of an inhomogeneous continuous-time Markov chain, all our mathematical and computational developments are almost directly transferable to this larger class of models. Therefore, our developments potentially have implications for nonparametric estimation of inhomogeneous continuous-time Markov chains with numerous applications.
%, such as modelling disease progression \citep{Hubbard} and rate variation in molecular evolution \citep{Huelsenbeck2000}.

\section*{Acknowledgements}
We thank Peter Guttorp, Joe Felsenstein, and Elizabeth Thompson for helpful discussions. VNM was supported by the NSF grant No.\ DMS-0856099. JAP acknowledges scholarship from CONACyT Mexico to pursue her doctoral work.

\renewcommand{\baselinestretch}{1.8}

\bibliographystyle{apalike}
\bibliography{skytrack.bib}
\pagebreak
\appendix
\setcounter{section}{0}

\section{Coalescent Simulation Algorithms}
\renewcommand\thesection{\Alph{section}}
\setcounter{equation}{0}%
\numberwithin{equation}{section}
\begin{prop}
Algorithm 1 generates $t_{n}<t_{n-1}<\cdots<t_{1}$, such that
\begin{equation}
P(t_{k-1}>t|t_{k})= \exp\left[ -\int^{t}_{t_{k}}\frac{C_{k}dx}{N_{e}(x)}\right],
\end{equation}
where $N_{e}(t)$ is known deterministically.
\end{prop}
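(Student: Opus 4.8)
The plan is to recognize Algorithm~1 as an instance of Poisson thinning and then read off, or directly compute, the survival function of the first accepted point. First I would reduce to a single stage: fix $k\in\{2,\dots,n\}$ and condition on the event that the algorithm has reached the state with $k$ remaining lineages, with the running variable equal to $t_k$. By the lack-of-memory property of the exponential distribution, the sequence of proposals generated after this moment is distributed as the arrival times of a homogeneous Poisson process on $(t_k,\infty)$ with rate $C_k\lambda$, independent of the past. This strong Markov / ``restart'' step is what lets us analyze each stage in isolation; composed over $k=n,n-1,\dots,2$ it also delivers the full joint law of $(t_n,\dots,t_1)$, so it suffices to establish the conditional identity for a single $k$.

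Next I would apply thinning. Each proposed point $x$ is retained independently with probability $p(x)=1/\bigl(N_e(x)\lambda\bigr)$, which lies in $[0,1]$ by the hypothesis $1/N_e(\cdot)\le\lambda$. By the marking/thinning theorem for Poisson processes --- the same device used by \citet{lewis_simulation_1978} and \citet{ogata2002lewis} --- the retained points form an inhomogeneous Poisson process on $(t_k,\infty)$ with intensity $C_k\lambda\cdot p(x)=C_k/N_e(x)$. Since $t_{k-1}$ is by construction the first retained point to the right of $t_k$, the event $\{t_{k-1}>t\}$ is exactly the event that this inhomogeneous process has no point in $(t_k,t]$, whose probability is $\exp\!\bigl[-\int_{t_k}^{t}C_k/N_e(x)\,dx\bigr]$, i.e.\ the survival function asserted in the proposition.

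For a self-contained derivation avoiding the thinning theorem, I would instead compute $P(t_{k-1}>t\mid t_k)$ from first principles: letting $N(t)$ count the proposals in $(t_k,t]$, condition on $\{N(t)=j\}$, use that the proposal locations are then i.i.d.\ uniform on $(t_k,t]$, so that none of them is accepted with conditional probability $\bigl(\tfrac{1}{t-t_k}\int_{t_k}^{t}(1-\tfrac{1}{N_e(u)\lambda})\,du\bigr)^{j}$, and sum over $j$ against $P(N(t)=j)=e^{-C_k\lambda(t-t_k)}\,(C_k\lambda(t-t_k))^{j}/j!$. The exponential series collapses and the factor $e^{-C_k\lambda(t-t_k)}$ cancels the $C_k\lambda(t-t_k)$ contribution, leaving precisely $\exp\!\bigl[-\int_{t_k}^{t}C_k/N_e(x)\,dx\bigr]$.

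Finally I would close the loop by verifying that the algorithm almost surely terminates: because $P(t_{k-1}>t\mid t_k)=\exp\!\bigl[-\int_{t_k}^{t}C_k/N_e(x)\,dx\bigr]\to 0$ as $t\to\infty$ exactly when $\int_0^{\infty}du/N_e(u)=\infty$, the first accepted point is finite a.s., so each \textsc{while} stage ends and the output $t_n<t_{n-1}<\cdots<t_1$ is well defined. The only genuinely delicate point --- and the one I would be most careful to spell out --- is the strong Markov restart in the first paragraph, since in the pseudocode the running variable $t$ is never reset and the proposal rate $C_k\lambda$ switches at each accepted point; once the single-stage thinning picture is in place, everything else is routine bookkeeping.
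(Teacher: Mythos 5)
Your proposal is correct, and its second, ``self-contained'' derivation is precisely the paper's proof: condition on the number $m$ of rate-$C_k\lambda$ proposals falling in $(t_k,t]$, use that their locations are conditionally i.i.d.\ uniform so that all are rejected with probability $\bigl(1-\tfrac{1}{\lambda(t-t_k)}\int_{t_k}^{t}dx/N_e(x)\bigr)^{m}$, and collapse the Poisson series. Your lead argument is genuinely different in flavor: it invokes the Poisson marking/thinning theorem, so the accepted points form an inhomogeneous Poisson process with intensity $C_k/N_e(x)$ and the claimed survival function is just the void probability of $(t_k,t]$. That route is shorter and forces you to articulate two things the paper leaves implicit: the memoryless ``restart'' coupling that reduces the analysis to a single stage $k$ (the algorithm stops thinning the rate-$C_k\lambda$ stream at the first acceptance, but up to that time it agrees with the fully thinned, continued stream, so the law of the first accepted point is unchanged), and the role of $\int_0^{\infty}du/N_e(u)=\infty$ in guaranteeing almost-sure termination, which the paper states only in the main text rather than in the proof. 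A small additional point in your favor: summing over $j\ge 0$ is the correct indexing, whereas the paper's displayed sum starts at $m=1$ and silently omits the (nonzero) $m=0$ term, a minor slip your version avoids.
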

\begin{proof}
Let $T_{i}=t_{k}+E_{1}+\ldots+E_{i}$, where $\{E_{i}\}^{\infty}_{i=1}$ are iid exponential  $Exp(C_{k}\lambda)$ random numbers. Given $t_{k}$, Algorithm 1 generates and accumulates iid exponential random numbers until $T_{i}$ is accepted with probability $1/\lambda N_{e}(T_{i})$, in which case, $T_{i}$ is labeled $t_{k-1}$. Let $N(t_{k},t]=\#\{i \geq 1: t_{k}<T_{i}\leq t\}$ denote the number of iid exponential random numbers generated in $(t_{k},t]$. Then, $\{N(t_{k},t],t>t_{k}\}$ constitutes a Poisson process with intensity $C_{k}\lambda$. Then, given  $N(t_{k},t]=1$, the conditional density of a point $x$ in $(t_{k},t]$ is $1/(t-t_{k})$ and the probability of accepting such a point as a coalescent time point with variable population size is $1/\lambda N_{e}(x)$. Hence
\begin{equation}
P(t_{k-1} \leq t | t_{k},N(t_{k},t]=1)=\frac{1}{\lambda(t-t_{k})}\int^{t}_{t_{k}}\frac{dx}{N_{e}(x)},
\end{equation}
and
\begin{equation}
P(t_{k-1} > t | t_{k},N(t_{k},t]=m)=\left( 1-\frac{1}{\lambda(t-t_{k})}\int^{t}_{t_{k}}\frac{dx}{N_{e}(x)}\right) ^{m}.
\end{equation}
Then,
\begin{equation*}
P(t_{k-1}>t|t_{k})=\sum^{\infty}_{m=1}P(t_{k-1}>t|t_{k},N(t_{k},t]=m)P(N(t_{k},t]=m)
\end{equation*}
\begin{equation*}
=\sum^{\infty}_{m=1}\left( 1-\frac{1}{\lambda(t-t_{k})}\int^{t}_{t_{k}}\frac{dx}{N_{e}(x)}\right) ^{m}\frac{\left( C_{k}\lambda (t-t_{k})\right)^{m} \exp\left[ -C_{k} \lambda (t-t_{k}) \right] }{m!}
\end{equation*}
\begin{equation*}
= \exp\left[ -C_{k} \lambda (t-t_{k}) \right]\sum^{\infty}_{m=1}\frac{\left( C_{k}\lambda(t-t_{k})-C_{k}\int^{t}_{t_{k}}\frac{dt_{k-1}}{N_{e}(t_{k-1})}\right) ^{m}}{m!}
\end{equation*}
\begin{equation*}
=\exp\left[ -\int^{t}_{t_{k}}\frac{C_{k}dx}{N_{e}(x)}\right].
\end{equation*}
\end{proof}

\renewcommand{\algorithmicrequire}{\textbf{Input:}} 
\renewcommand{\algorithmicensure}{\textbf{Output:}}
\setcounter{algorithm}{1}
\linespread{0.9}
\begin{algorithm}
\caption{Simulation of isochronous coalescent times by thinning with $f(t) \sim \mathcal{GP}(\mathbf{0},\mathbf{C}(\theta))$} \label{alg3} 
\begin{algorithmic} [1]
\REQUIRE {$k=n$, $t_{n}=0$, $t=0$, $i_{j}=0$, $m_{j}=0$, $j=2,\ldots,n$, $\lambda$}
\ENSURE {$\mathcal{T}=\{t_{k}\}^{n}_{k=1}$, $\mathcal{N}=\{\{t_{k,i}\}^{m_{k}}_{i=1}\}^{n}_{k=2}$, $\mathbf{f}_{\mathcal{T},\mathcal{N}}$}
\WHILE {$k>1$}
\STATE Sample $E\sim Exponential (C_{k}\lambda)$ and $U\sim U(0,1)$
\STATE t=t+E
\STATE Sample $f(t) \sim P(f(t)|\{f(t_{l})\}^{n}_{l=k},\{\{f(t_{l,i})\}^{m_{l}}_{i=1}\}^{n}_{l=k};\theta)$
\IF{$U \leq \frac{1}{1+\exp(-f(t))} $}
\STATE $k \leftarrow k-1$, $t_{k} \leftarrow t$
\ELSE
\STATE  $i_{k}\leftarrow i_{k}+1$, $m_{k}\leftarrow m_{k}+1$, $t_{k,i_{k}}\leftarrow t$
\ENDIF
\ENDWHILE
\end{algorithmic} 
\end{algorithm}

Algorithm 3 and 4 are analogous heterochronous versions of Algorithm 1 and 2.

\linespread{0.9}
\begin{algorithm}
\caption{Simulation of heterochronous coalescent by thinning - $N_{e}(t)$ is a deterministic function} \label{alg2} 
\begin{algorithmic} [1]
\REQUIRE {$n_{1},n_{2},\ldots,n_{m}$, $s_{1},\ldots,s_{m}$, $1/N_{e}(t) \leq \lambda$, $N_{e}(t)$, $m$}
\ENSURE {for $n=\sum^{m}_{j=1}n_{j}$, $\mathcal{T}=\{t_{k}\}^{n}_{k=1}$}
\STATE $i=1$, $j=n-1$, $n=n_{1}$, $t=t_{n}=s_{1}$
\WHILE {$i<m+1$}
\STATE Sample $E\sim Exp(\binom{n}{2}\lambda)$ and $U\sim U(0,1)$
\IF{$U \leq \frac{1}{N_{e}(t+E)\lambda} $}
\IF{$t+E<s_{i+1}$ }
\STATE $t_{j} \leftarrow t \leftarrow t+E$
\STATE $j \leftarrow j-1$, $n \leftarrow n-1$ 
\IF {$n>1$}
\STATE go to 2
\ELSE
\STATE go to 14
\ENDIF
\ELSE
\STATE $i \leftarrow i+1$, $ t \leftarrow s_{i}$, $n \leftarrow n+n_{i}$
\ENDIF
\ELSE 
\STATE $t \leftarrow t+E$
\ENDIF 
\ENDWHILE
\end{algorithmic} 
\end{algorithm}

\renewcommand{\algorithmicrequire}{\textbf{Input:}} 
\renewcommand{\algorithmicensure}{\textbf{Output:}}
\linespread{0.9}
\begin{algorithm}
\caption{Simulation of heterochronous coalescent by thinning with $f(t) \sim \mathcal{GP}(\mathbf{0},\mathbf{C}(\theta))$} \label{alg4} 
\begin{algorithmic} [1]
\REQUIRE {$n_{1},n_{2},\ldots,n_{m}$, $s_{1}=0,\ldots,s_{m}$, $i_{j}=0$, $m_{j}=0$, $j=2,\ldots,n$, $\lambda$, $m$}
\ENSURE {for $n=\sum^{m}_{j=1}n_{j}$, $\mathcal{T}=\{t_{k}\}^{n}_{k=1}$, $\mathcal{N}=\{\{t_{k,i}\}^{m_{k}}_{i=1}\}^{n}_{k=2}$, $\mathbf{f}_{\mathcal{T},\mathcal{N}}$}
\STATE $i=1$, $j=n-1$, $n=n_{1}$, $t=t_{n}=s_{1}$
\WHILE {$i<m+1$}
\STATE Sample $E\sim Exp(\binom{n}{2}\lambda)$ and $U\sim U(0,1)$
\STATE Sample $f(t+E) \sim P(f(t+E)|\{f(t_{l})\}^{n}_{l=k},\{\{f(t_{l,i})\}^{m_{l}}_{i=1}\}^{n}_{l=k};\theta)$
\IF{$U \leq \frac{1}{1+\exp(-f(t+E))}  $}
\IF{$t+E<s_{i+1}$ }
\STATE $t_{j} \leftarrow t \leftarrow t+E$
\STATE $j \leftarrow j-1$, $n \leftarrow n-1$ 
\IF {$n>1$}
\STATE go to 2
\ELSE
\STATE go to 14
\ENDIF
\ELSE
\STATE $i \leftarrow i+1$, $ t \leftarrow s_{i}$, $n \leftarrow n+n_{i}$
\ENDIF
\ELSE 
\IF {$t+E<s_{i+1}$}
\STATE $t_{j+1,i_{j+1}} \leftarrow t+E$, $i_{j+1}\leftarrow i_{j+1}+1$
\ENDIF
\STATE $t \leftarrow t+E$
\ENDIF 
\ENDWHILE
\end{algorithmic} 
\end{algorithm}

An \texttt{R} implementation of these algorithms is available at \\
\url{ http://www.stat.washington.edu/people/jpalacio.}

\clearpage

\section{MCMC Sampling}
\renewcommand\thesection{\Alph{section}}%
\setcounter{equation}{0}%
\numberwithin{equation}{section}%

\renewcommand\thesection{\setcounter{section}{2}\Alph{section}}
Since the coalescent under isochronous sampling is a particular case of the coalescent model under heterochronous sampling, we employ here the notation of the heterochronous coalescent, understanding that $C_{0,k}=C_{k}$, $I_{0,k}=(t_{k},t_{k-1}]$ and $i=0$ for isochronous data.\\
{\bf{Sampling the number of latent points}}. A reversible jump algorithm is constructed for the number of ``rejected'' points. We propose to add or 
remove points with equal probability in each interval defined by Equations (3) and (4). 
When adding a point in a particular interval, we propose a location uniformly from the interval and its 
predicted function value $f(t^{*}) \sim P(f(t^{*})|\mathbf{f}_{\mathcal{T},\mathcal{N}},\theta)$. 
When removing a point, we propose to remove a point selected uniformly from the pool of rejected points in that particular interval. We add points with proposal distributions $q^{i,k}_{up}$ and remove points with proposal distributions $q^{i,k}_{down}$. Then,
%\begin{equation}
%q^{k}_{up}=\frac{P(f(t^{*})|\mathcal{T},\mathcal{N})}{2l(I_{k})},
%\end{equation}
\begin{equation}
q^{i,k}_{up}=\frac{P(f(t^{*})|\mathcal{T},\mathcal{N},\theta)}{2l(I_{i,k})},
\end{equation}
%\begin{equation}
%q^{k}_{down}=\frac{1}{2\#(\mathcal{N}_{k})},
%\end{equation}
\begin{equation}
q^{i,k}_{down}=\frac{1}{2m_{i,k}},
\end{equation}
and the acceptance probabilities are:
%\begin{equation}
%a^{k}_{up}=\frac{l(I_{k}) \lambda C_{k}}{(\#\mathcal{N}_{k}+1)(1+e^{f(t^{*})})},
%\end{equation}
\begin{equation}
a^{i,k}_{up}=\frac{ l(I_{i,k})\lambda C_{i,k}}{(m_{i,k}+1)(1+e^{f(t^{*})})},
\end{equation}
%\begin{equation}
%a^{k}_{down}=\frac{(\#\mathcal{N}_{k})(1+e^{f(t^{*})})}{l(I_{k}) \lambda C_{k}},
%\end{equation}
\begin{equation}
a^{i,k}_{down}=\frac{m_{i,k}(1+e^{f(t^{*})})}{ l( I_{i,k})\lambda C_{i,k}}.
\end{equation}
\\
{\bf{Sampling locations of latent points}}. We use a Metropolis-Hastings algorithm to update the locations of 
latent points. We first choose an interval defined by Equations (3) and (4) with probability proportional to its length and we then propose point locations uniformly at random in that interval together with their predictive function values $\mathbf{f}_{\mathbf{t}^{*}} \sim P(\mathbf{f}_{\mathbf{t}^{*}}|\mathbf{f}_{\mathcal{T},\mathcal{N}},\theta)$. Since the proposal distributions are symmetric, the acceptance probabilities are:
\begin{equation}
a^{i,k}=\frac{1+e^{f(t)}}{1+e^{f(t^{*})}}.
\end{equation}
\\
{\bf{Sampling transformed effective population size values}}. We use an elliptical slice sampling proposal described in \citep{murray_elliptical_2010}. In both cases, isochronous or heterochronous, the full conditional distribution of the function values $\mathbf{f}_{\mathcal{T},\mathcal{N}}$ is proportional to the product of a Gaussian density and the thinning acceptance and rejection probabilities:
\begin{equation}
P(\mathbf{f}_{\mathcal{T},\mathcal{N}}|\mathcal{T},\mathcal{N},\lambda,\theta) \propto P(\mathbf{f}_{\mathcal{T},\mathcal{N}}|\theta) L(\mathbf{f}_{\mathcal{T},\mathcal{N}}),
\end{equation}
where
 \begin{equation}
 L(\mathbf{f}_{\mathcal{T},\mathcal{N}})=\prod^{n}_{k=2}\left( \frac{1}{1+e^{-f(t_{k-1})}} \right)
\prod^{m_{k}}_{i=1}\frac{1}{1+e^{f(t_{k,i})}}.
\end{equation}
{\bf{Sampling hyperparameters}}. 
The full conditional of the precision parameter $\theta$ is a Gamma distribution. Therefore, we update $\theta$ by drawing from its full  conditional:
\begin{equation}
\theta|\mathbf{f}_{\mathcal{T},\mathcal{N}},\mathcal{T},\mathcal{N} \sim Gamma \left( \alpha^{*}=\alpha+\frac{\#\{\mathcal{N} \cup \mathcal{T}\}}{2},
\beta^{*}=\beta + \frac{\mathbf{f}_{\mathcal{T},\mathcal{N}}^{t}Q\mathbf{f}_{\mathcal{T},\mathcal{N}}}{2}\right),
\end{equation}
where $Q=\frac{1}{\theta}C^{-1}$.\\ 
For the upper bound $\lambda$ on $N_{e}(t)^{-1}$, we use the Metropolis-Hastings update by proposing new values using a uniform  proposal reflected at $0$. That is, we propose $\lambda^{*}$ from $U(\lambda-a,\lambda+a)$. If the proposed value $\lambda^{*}$ is negative, we flip its sign. Since the proposal distribution is symmetric, the acceptance probability is:
\begin{equation}
a=\frac{P(\lambda^*)}{P(\lambda)}\left(\frac{\lambda^*}{\lambda} \right)^{\#\left\lbrace \mathcal{N} \cup \mathcal{T} \right\rbrace} \exp\left[ -\left( \lambda^*-\lambda\right) \sum^{n}_{k=2}\sum^{m_{k}}_{i=1}C_{i,k}l(I_{i,k})\right],
\end{equation} 
where $P(\lambda)$ is defined in Equation (7).
%\end{appendices}

\renewcommand{\baselinestretch}{1.8}

\end{document}